\newtheorem{theorem}{Theorem}
\theoremstyle{plain}
\numberwithin{equation}{section}
\begin{document}
\title[Decomposition of Third-order Linear Time-varying Systems]{%
Decomposition of Third-order Linear Time-varying Systems into its Second and
First-order Commutative Pairs}
\author{Mehmet Emir Koksal}
\address[Mehmet Emir Koksal]{Department of Mathematics, Ondokuz Mayis
University \\
52139 Atakum, Samsun, Turkey }
\email[Mehmet Emir Koksal]{emir\_koksal@hotmail.com}
\author{Ali Yakar}
\address[Ali Yakar]{Department of Mathematics, Gaziosmanpasa University \\
60250 Tokat, Turkey }
\date{September 15, 2017}
\subjclass[2010]{34H05, 49K15, 93B52, 93C15}
\keywords{Differential equations, initial conditions, analogue control,
equivalent circuits, physical systems}

\begin{abstract}
Decomposition is a common tool for synthesis of many physical systems. It is
also used for analyzing large scale systems which then known as fearing and
reconstruction. On the other hand, commutativity of cascade connected
systems have gained a grate deal of interest and its possible benefits have
been pointed out in the literature. In this paper, the necessary and
sufficient conditions for decomposability of a third-order linear
time-varying systems as a pair of second and first-order systems of which
parameters are also explicitly expressed. Further, additional requirements
in case of non-zero initial conditions are derived.This paper highlights the
direct formulas for realization of any third order linear time-varying
system as a series (cascade) connection of first and second order
subsystems. This series connection is commutative so that it is independent
from the sequence of subsystems in the connection. Hence, the convenient
sequence can be decided by considering the overall performance of the system
when the sensitivity, disturbance and robustness effects are considered.
Realization covers transient responses as well as steady state responses.
\end{abstract}

\maketitle

\section{Introduction}

\bigskip Differential equations arise as common models in the physical,
mathematical, biological and engineering sciences and most real physical
processes are governed by differential equations. The fundamental laws
governing many physical process are known relationships between various
quantities and their derivatives. In general, most real physical processes
involve more than one independent variable and the corresponding
differential equations. Especially, differential equations are used for
modelling problems in electric-electronics engineering, the touchstone and
largest branch of engineering technology and includes a diverse range of
sub-disciplines, such as embedded systems, control systems,
telecommunications, and power systems. For instance, in system and control
theory, the transfer function, also known as the system function or network
function, is a mathematical representation of the relation between the input
and output based on the differential equations describing the system such as
cascade and feedback connections. When the cascade connection in system
design is considered, the commutativity concept places an prominent role to
improve different system performances.

Cascade connection of subsystems is a commonly used method for designing
many engineering systems especially electrical and electronic devices \cite%
{1,2,3,4,5}. For example, cascade connection is used for connecting the
server module located in another subnetwork via an intermediate computer
that has two network interfaces for two subnetworks. The order of connection
is important for achieving more reliable systems which are less sensitive
and more robust to internal and external disturbances, and it may depend on
many criteria such as the used design technique, engineering ingenuity, and
traditional habits. Therefore, the change of the order of connection may be
thought for the possibility of obtaining better performances without
spoiling the main function of the total system (commutativity). Hence,
commutativity is important from engineering point of view.

When two simple systems are connected in cascade, that is the output of the
former acts as the input of the later \cite{6,7}, if the order of connection
does not change the input-output relation of the combined system then we say
that these systems are commutative.

There are a great deal of literature about the commutativity of linear
continuous time-varying systems \cite[9-16]{8} though there are a few works
on the discrete time-varying systems \cite{17,18}. The first paper on the
commutativity in the literature has been studied by Marshall in \cite{8} and
it is proved that a time-varying system can be commutative with another
time-varying system. Then, commutativity conditions of second-order,
third-order and fourth-order systems were studied in \cite{9,10,11}, \cite%
{12} and \cite{13}, respectively. In \cite{14}, the most general necessary
and sufficient conditions for the commutativity of systems of any order but
without initial conditions were studied. This study also includes results
concerning the commutativity properties of feed-back control systems and
Euler differential systems. Moreover, the previous results for commutativity
conditions of first-order, second-order, third-order and fourth-order
systems were shown to be deduced from the main theorem of \cite{14}.

More than two decades later, the explicit commutativity conditions for
linear time-varying differential systems with non-zero initial conditions
\cite{15} and the explicit commutativity conditions for the fifth-order
systems derived for the first time in \cite{15}.

Final study on the commutativity of analogue systems was studied in \cite{16}
covering necessary and sufficiently conditions for the decomposition of a
second-order linear time-varying system into two cascade connected
commutative first-order linear time-varying subsystems. Further, explicit
formulas describing these subsystems were presented by illustrative examples
and simulations.

References \cite{17,18} are debuted to investigation of commutativity of
discrete-time (digital) systems. The concept of commutativity for digital
systems was defined in \cite{17} for the first time. Then, the possible
benefits of commutativity such as noise disturbance,effects, parameter
sensitivity are outlined in \cite{18}. In unpublished work, the transitivity
property is examined and it holds for analog systems. For digital systems
however, it has not been reported anywhere.

In this paper, after deriving some mathematical preliminaries in Section II,
the basic equations are that must be satisfied for commutativity are derived
for in Section III. These equations are solved in Section IV. In Section V,
the coefficients of the second and first-order components are explicitly
expressed in terms of those of the original third-order system. Section VI
covers a few illustrative examples. And finally, the paper ends with Section
VII conclusion.

\section{Mathematical Preliminaries}

Let $C$ be a third-order linear time-varying analog system described by
\begin{equation}
c_{3}(t)y^{\prime \prime \prime }(t)+c_{2}(t)y^{\prime \prime
}(t)+c_{1}(t)y^{\prime }(t)+c_{0}(t)y(t)=x(t),  \tag{2.1}  \label{1}
\end{equation}%
with the input $x(t)$ and output $y(t)$. Where $c_{i}(t)$ are time-varying
coefficients which are piecewise continuous on $[t_{0},\infty )$; this set
of function are devoted by $P[t_{0},\infty )$; also assume the initial
conditions $y(t_{0})$, $y^{\prime }(t_{0})$, $y^{\prime \prime }(t_{0})$ at
the initial time $t_{0}$ $\in R,$ where the number of overhaead dots
represent the order of derivatives. Due to its order of $3$, $c_{3}(t)\neq 0$%
.

It is well-known that such a system has a unique solution for all $x(t)\in
P[t_{0},\infty )$. Consider the decomposition of $C$ as the cascade
connection of a first-order system $A$ and second-order $B$ described by%
\begin{equation}
A:a_{1}(t)y_{A}^{\prime }(t)+a_{0}(t)y_{A}(t)=x_{A}(t),  \tag{2.2}
\label{2a}
\end{equation}%
\begin{equation}
B:b_{2}(t)y_{B}^{\prime \prime }(t)+b_{1}(t)y_{B}^{\prime
}(t)+b_{0}(t)y_{B}(t)=x_{B}(t),  \tag{2.3}  \label{2b}
\end{equation}%
with the initial conditions
\begin{equation}
y_{A}(t_{0})  \tag{2.4}  \label{2c}
\end{equation}%
\begin{equation}
y_{B}(t_{0}),\text{ }y_{B}^{\prime }(t_{0}).  \tag{2.5}  \label{2d}
\end{equation}%
Due to their orders $a_{1}(t)\neq 0$, $b_{2}(t)\neq 0$. Further, assume $%
a_{i}$, $b_{i}$, $x_{A}$, $x_{B}\in P[t_{0},\infty )$. Moreover, assume that
$a_{i}$'s are differentiable up to second-order and $b_{i}^{\prime }s$ are
differentiable up to first-order. Assume also that the cascade connection of
$A$ and $B$, denoted by $AB$ or $BA$ according to their order of connection
as shown in Fig. $1a$ and $1b$ respectively, are commutative. That is $AB$
and $BA$ has the same input-output relation.

\begin{figure}[tbp]
\includegraphics[width=10cm, height=6cm]{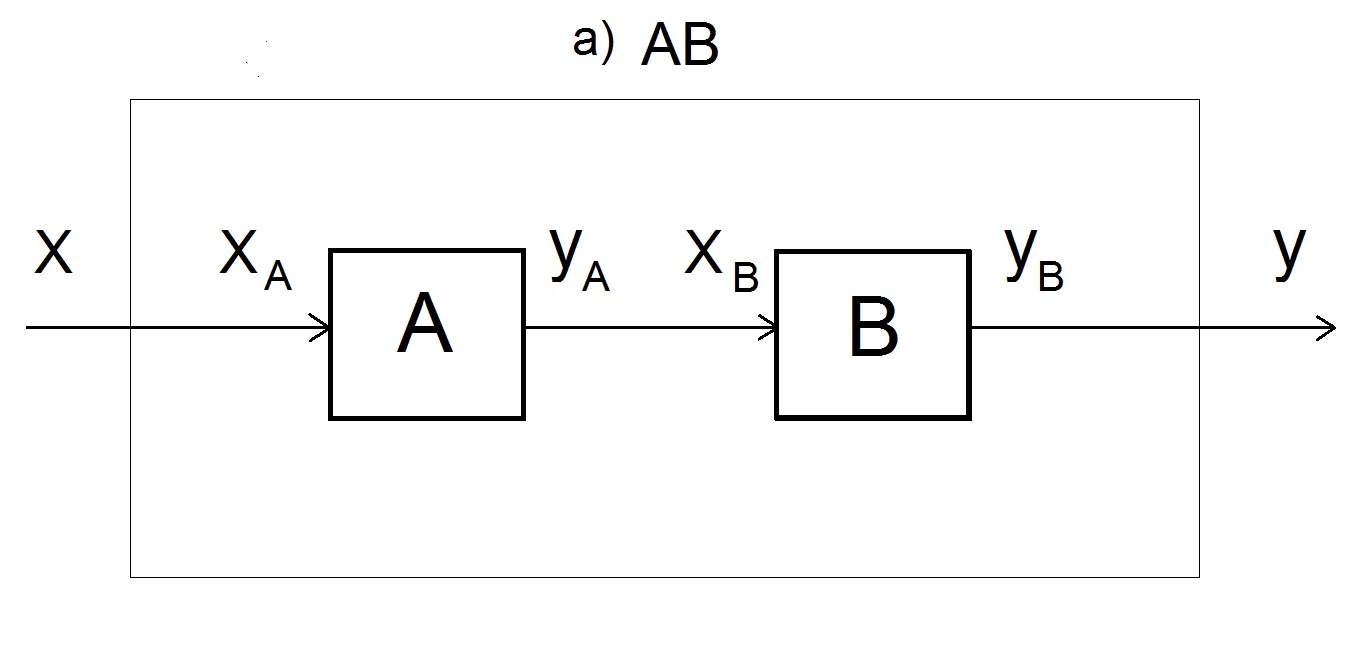} \includegraphics[width=10cm,
height=6cm]{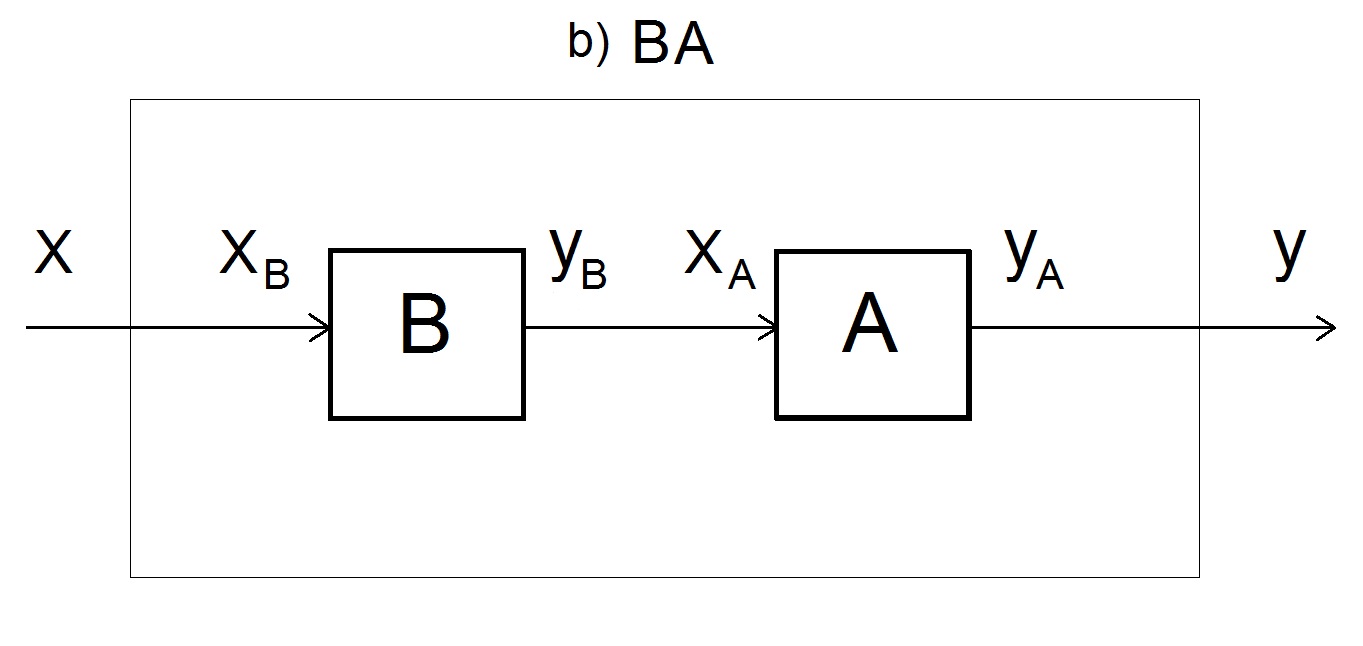}
\caption{Cascade connection of $A$ and $B;$ $a$) $AB$, $b$) $BA$}
\label{.}
\end{figure}

\bigskip

Due to the connection in Fig. $1a$, it is obvious that
\begin{equation}
x_{A}(t)=x(t),  \tag{2.6}  \label{3a}
\end{equation}%
\begin{equation}
y_{A}(t)=x_{B}(t),  \tag{2.7}  \label{3b}
\end{equation}%
\begin{equation}
y_{B}(t)=y(t).  \tag{2.8}  \label{3c}
\end{equation}

Differentiating Eq. (\ref{2b}), we obtain

\begin{equation}
b_{2}^{\prime }y_{B}^{\prime \prime }+b_{2}y_{B}^{\prime \prime \prime
}+b_{1}^{\prime }y_{B}^{\prime }+b_{1}^{\prime }y_{B}^{\prime \prime
}+b_{0}^{\prime }y_{B}+b_{0}y_{B}^{\prime }=x_{B}^{\prime }.  \tag{2.9}
\label{4a}
\end{equation}%
From Eq. (\ref{3b}), $x_{B}^{\prime }=y_{A}^{\prime },$ and then solving Eq.
(\ref{2a}) for $y_{A}^{\prime },$ finally using Eq. (\ref{3b}), we again
obtain%
\begin{equation*}
x_{B}^{\prime }=y_{A}^{\prime }=\frac{x_{A}-a_{0}y_{A}}{a_{1}}=\frac{%
x_{A}-a_{0}x_{B}}{a_{1}}
\end{equation*}%
\begin{equation}
=\frac{1}{a_{1}}\left[ x_{A}-a_{0}(b_{2}y_{B}^{\prime \prime
}+b_{1}y_{B}^{\prime }+b_{0}y_{B})\right] ,  \tag{2.10}  \label{4b}
\end{equation}%
where the last equality is obtained by using the expression Eq. (\ref{2b})
for $x_{B}$. Finally, inserting Eq. (\ref{4b}) in Eq. (\ref{4a}) and
replacing $y_{B}\rightarrow y$, $x_{A}\rightarrow x$ due to Eqs. (\ref{3c})
and (\ref{3a}), respectively, we obtain the following third-order
differential system for the connection $AB$%
\begin{equation*}
a_{1}b_{2}y^{\prime \prime \prime }+(a_{1}b_{2}^{\prime
}+a_{1}b_{1}+a_{0}b_{2})y^{\prime \prime }
\end{equation*}%
\begin{equation}
+(a_{1}b_{1}^{\prime }+a_{1}b_{0}+a_{0}b_{1})y^{\prime }+a_{1}b_{0}^{\prime
}+a_{0}b_{0})y=x,  \tag{2.11}  \label{5a}
\end{equation}%
\begin{equation}
y(t_{0})=y_{B}(t_{0}),  \tag{2.12}  \label{5b}
\end{equation}%
\begin{equation}
y^{\prime }(t_{0})=y_{B}^{\prime }(t_{0}),  \tag{2.13}  \label{5c}
\end{equation}%
\begin{equation}
y^{\prime \prime }(t_{0})=y_{B}^{\prime \prime }(t_{0})=\frac{%
y_{A}(t_{0})-b_{0}(t_{0})y_{B}(t_{0})-b_{1}(t_{0})y_{B}^{\prime }(t_{0})}{%
b_{2}(t_{0})}.  \tag{2.14}  \label{5d}
\end{equation}%
Eqs. (\ref{5b}) and (\ref{5c}) are obvious due to Eq. (\ref{3c}). Eq. (\ref%
{5d}) is obtained as follows: Due to Eq. (\ref{3c}), $y(t_{0})=y_{B}^{\prime
\prime }(t_{0})$ which is computed from Eq. (\ref{2b}) and inserting $%
x_{B}(t_{0})=y_{A}(t_{0})$ due to Eq. (\ref{3b}).

Similarly, due to the connection in Fig. $1b$, it is obvious that
\begin{equation}
x_{B}(t)=x(t),  \tag{2.15}  \label{6a}
\end{equation}%
\begin{equation}
y_{B}(t)=x_{A}(t),  \tag{2.16}  \label{6b}
\end{equation}%
\begin{equation}
y_{A}(t)=y(t).  \tag{2.17}  \label{6c}
\end{equation}

Differentiating (\ref{2a}) two times and ordering the terms, we obtain%
\begin{equation}
a_{1}y_{A}^{\prime \prime \prime }+(2a_{1}^{\prime }+a_{0})y_{A}^{\prime
\prime }+(a_{1}^{\prime \prime }+2a_{0}^{\prime })y_{A}^{\prime
}+a_{0}^{\prime \prime }y_{A}=x_{A}^{\prime \prime }.  \tag{2.18}  \label{7a}
\end{equation}%
Since $x_{A}^{\prime \prime }(t)=y_{B}^{\prime \prime }(t)$ due to Eq. (\ref%
{6b}), finding $y_{B}^{\prime \prime }$ from Eq. (\ref{2b}), and using Eq. (%
\ref{6b}) again, we have%
\begin{equation}
x_{A}^{\prime \prime }=y_{B}^{\prime \prime }=\frac{x_{B}-b_{1}y_{B}^{\prime
}-b_{0}y_{B}}{b_{2}}=\frac{x_{B}-b_{1}x_{A}^{\prime }-b_{0}x_{A}}{b_{2}}.
\tag{2.19}  \label{7b}
\end{equation}

Next inserting in the value of $x_{A}$ from Eq. (\ref{2a}) and the value of $%
x_{A}^{\prime }$ from derivative of Eq. (\ref{2a}) into the above equation,
we obtain
\begin{equation}
x_{A}^{\prime \prime }=\frac{x_{B}-b_{1}(a_{1}^{\prime }y_{A}^{\prime
}+a_{1}y_{A}^{\prime \prime }+a_{0}^{\prime }y_{A}^{\prime
}+a_{0}y_{A}^{\prime })-b_{0}(a_{1}y_{A}^{\prime }+a_{0}y_{A})}{b_{2}}.
\tag{2.20}  \label{7c}
\end{equation}

Inserting Eq. (\ref{7c}) in (\ref{7a}) and noting $y_{A}=y$ (Eq. (\ref{6c}))
and $x_{B}=x$, (Eq. (\ref{6a})), we obtain the third-order differential
equation describing $BA$ as%
\begin{equation*}
a_{1}b_{2}y^{\prime \prime \prime }+(2a_{1}^{\prime
}b_{2}+a_{0}b_{2}+a_{1}b_{1})y^{\prime \prime }
\end{equation*}%
\begin{equation}
+(a_{1}^{\prime \prime }b_{2}+2a_{0}^{\prime }b_{2}+a_{1}^{\prime
}b_{1}+a_{0}b_{1}+a_{1}b_{0})y^{\prime }+(a_{0}^{\prime \prime
}b_{2}+a_{0}^{\prime }b_{1}+a_{0}b_{0})y=x,  \tag{2.21}  \label{8a}
\end{equation}%
\begin{equation}
y(t_{0})=y_{A}(t_{0}),  \tag{2.22}  \label{8b}
\end{equation}%
\begin{equation}
y^{\prime }(t_{0})=y_{A}^{\prime }(t_{0})=\frac{%
y_{B}(t_{0})-a_{0}(t_{0})y_{A}(t_{0})}{a_{1}(t_{0})},  \tag{2.23}  \label{8c}
\end{equation}%
\begin{equation*}
y^{\prime \prime }(t_{0})=y_{A}^{\prime \prime }(t_{0})=\frac{1}{a_{1}(t_{0})%
}y_{B}^{\prime }(t_{0})-\frac{a_{0}(t_{0})+a_{1}^{\prime }(t_{0})}{%
a_{1}^{2}(t_{0})}y_{B}(t_{0})
\end{equation*}%
\begin{equation}
+\left[ \frac{a_{0}^{2}(t_{0})+a_{1}^{\prime }(t_{0})a_{0}(t_{0})}{%
a_{1}^{2}(t_{0})}-\frac{a_{0}^{\prime }(t_{0})}{a_{1}(t_{0})}\right]
y_{A}(t_{0}).  \tag{2.24}  \label{8d}
\end{equation}

The derivative of the initial conditions in Eqs. (\ref{8b})-(\ref{8d}) is
done as follows: Eq. (\ref{8b}) follows from Eq. (\ref{6c}). To find Eq. (%
\ref{8b}), we start from Eq. (\ref{6c}) and write $y^{\prime
}(t)=y_{A}^{\prime }(t_{0})$, from Eq. (\ref{2a})%
\begin{equation}
y(t)=y_{A}^{\prime }(t)=\frac{x_{A}(t)-a_{0}(t)y_{A}(t)}{a_{1}(t)}=\frac{%
y_{B}(t)-a_{0}(t)y_{A}(t)}{a_{1}(t)}.  \tag{2.25}  \label{9a}
\end{equation}%
Inserting $t=t_{0}$ yields Eq. (\ref{8c}). To find Eq. (\ref{8d}), we start
from Eq. (\ref{6c}), take derivative of Eq. (\ref{2a}) and solve result for $%
y_{A}^{\prime \prime }$%
\begin{equation}
y^{\prime \prime }=y_{A}^{\prime \prime }=\frac{y_{B}^{\prime
}-(a_{1}^{\prime }+a_{0})y_{A}^{\prime }-a_{0}^{\prime }y_{A}}{a_{1}}.
\tag{2.26}  \label{9b}
\end{equation}

Using the expression Eq. (\ref{9a}) for $y_{A}^{\prime }$ in Eq. (\ref{9b}),
ordering the terms and evaluating at $t=t_{0}$ yields the initial conditions
in Eq. (\ref{8d}).

\section{Commutativity Requirements}

For the commutativity of subsystem $A$ and $B$, their combinations $AB$ and $%
BA$ must have the same outputs for general values of the same input and the
same initial conditions. This is due to the existence of unique equal
solutions of differential equations derived in Eqs. (\ref{5a})-(\ref{5d})
and (\ref{8a})-(\ref{8d}) for the same input and initial conditions. Hence,
equating the coefficients of these differential equations, collecting the
like terms we result with%
\begin{equation}
a_{1}b_{2}^{\prime }=2a_{1}^{\prime }b_{2}  \tag{3.1}  \label{10a}
\end{equation}%
\begin{equation}
a_{1}b_{1}^{\prime }=a_{1}^{\prime }b_{1}+(a_{1}^{\prime \prime
}+2a_{0}^{\prime })b_{2}  \tag{3.2}  \label{10b}
\end{equation}%
\begin{equation}
a_{1}b_{0}^{\prime }=a_{0}^{\prime \prime }b_{2}+a_{0}^{\prime }b_{1}
\tag{3.3}  \label{10c}
\end{equation}%
\begin{equation}
y=y_{B}=y_{A}  \tag{3.4}  \label{11a}
\end{equation}%
\begin{equation}
y^{\prime }=y_{B}^{\prime }=\frac{y_{B}-a_{0}y_{A}}{a_{1}},  \tag{3.5}
\label{11b}
\end{equation}%
\begin{equation}
y^{\prime \prime }=\frac{y_{A}-b_{0}y_{B}-b_{1}y_{B}^{\prime }}{b_{2}}=\frac{%
1}{a_{1}}y_{B}^{\prime }-\frac{a_{0}+a_{1}^{\prime }}{a_{1}^{2}}y_{B}+\left(
\frac{a_{0}^{2}+a_{0}a_{1}^{\prime }-a_{0}^{\prime }a_{1}}{a_{1}^{2}}\right)
y_{A}  \tag{3.6}  \label{11c}
\end{equation}

Note that Eqs. (\ref{11a})-(\ref{11c}) (so should (\ref{12a})-(\ref{12d}))
should be valid at the initial time $t=t_{0}$ which is not shown explicitly.
Before proceeding further we simplify Eqs. (\ref{11a})-(\ref{11c}) to obtain
simpler set of constraints.

\begin{equation}
y=y_{B}=y_{A},  \tag{3.7}  \label{12a}
\end{equation}

\begin{equation}
y^{\prime }=y_{B}^{\prime }=\frac{1-a_{0}}{a_{1}}y_{A},  \tag{3.8}
\label{12b}
\end{equation}

\begin{equation}
y^{\prime \prime }=\left[ \frac{1-b_{0}}{b_{2}}-\frac{b_{1}(1-a_{0})}{%
b_{2}a_{1}}\right] y_{A}=\frac{(1-a_{0})(1-a_{0}-a_{1}^{\prime
})-a_{0}^{\prime }a_{1}}{a_{1}^{2}}y_{A}.  \tag{3.9}  \label{12c}
\end{equation}%
Hence, Eq. (\ref{12c}) requires.%
\begin{equation}
\left[ \frac{1-b_{0}}{b_{2}}-\frac{b_{1}(1-a_{0})}{b_{2}a_{1}}-\frac{%
(1-a_{0})(1-a_{0}-a_{1}^{\prime })-a_{0}^{\prime }a_{1}}{a_{1}^{2}}+\frac{%
a_{0}^{\prime }}{a_{1}}\right] y_{A}=0.  \tag{3.10}  \label{12d}
\end{equation}

\section{Explicit Commutativity Requirements}

Eq. (\ref{10a}) has a solution for $b_{2}$ in terms of $a_{i}$'s as%
\begin{equation}
b_{2}=e_{2}a_{1}^{2},  \tag{4.1}  \label{13a}
\end{equation}%
where $c_{2}$ is an arbitrary non-zero constant. Using this solution in (\ref%
{10b}) and taking integral, we proceed

\begin{equation*}
a_{1}b_{1}^{\prime }=a_{1}^{\prime }b_{1}+(a_{1}^{\prime \prime
}+2a_{0}^{\prime })e_{2}a_{1}^{2}
\end{equation*}

\begin{equation*}
\frac{a_{1}b_{0}^{\prime }-a_{1}^{\prime }b_{1}}{a_{1}^{2}}%
=e_{2}(a_{1}^{\prime \prime }+2a_{0}^{\prime })
\end{equation*}

\begin{equation*}
\frac{d}{dt}\left( \frac{b_{1}}{a_{1}}\right) =e_{2}(a_{1}^{\prime \prime
}+2a_{0}^{\prime })
\end{equation*}

\begin{equation*}
\frac{b_{1}}{a_{1}}=e_{2}(a_{1}^{\prime }+2a_{0})+e_{1}
\end{equation*}

\begin{equation}
b_{1}=e_{2}(a_{1}^{\prime }+2a_{0})a_{1}+e_{1}a_{1}.  \tag{4.2}  \label{13b}
\end{equation}%
Inserting values of $b_{2}$ in Eq. (\ref{13a}) and $b_{1}$ in Eq. (\ref{13b}%
) into Eq. (\ref{10c}), we proceed

\begin{equation*}
a_{1}b_{0}^{\prime }=a_{1}^{\prime \prime }e_{2}a_{1}^{2}+a_{0}^{\prime }
\left[ e_{2}(a_{1}^{\prime }+2a_{0})a_{1}+e_{1}a_{1}\right]
\end{equation*}%
\begin{equation*}
b_{0}^{\prime }=a_{1}^{\prime \prime }e_{2}a_{1}+a_{0}^{\prime }\left[
e_{2}(a_{1}^{\prime }+2a_{0})+e_{1}\right]
\end{equation*}

\begin{equation*}
=e_{2}\left( a_{1}^{\prime \prime }+a_{0}^{\prime }a_{1}^{\prime
}+2a_{0}a_{0}^{\prime }\right) +e_{1}a_{0}^{\prime }=e_{2}\frac{d}{dt}%
(a_{0}^{\prime }a_{1}+a_{0}^{2})+e_{1}a_{0}^{\prime },
\end{equation*}

\begin{equation}
b_{0}=e_{2}(a_{0}^{\prime }a_{1}+a_{0}^{2})+e_{1}a_{0}^{\prime }+e_{0},
\tag{4.3}  \label{13c}
\end{equation}%
where $e_{0}$ is an integration constant. In the matrix form
\begin{equation}
\left[
\begin{array}{c}
b_{2} \\
b_{1} \\
b_{0}%
\end{array}%
\right] =\left[
\begin{array}{ccc}
a_{1}^{2} & 0 & 0 \\
a_{0}^{\prime }+2a_{0} & 1 & 0 \\
a_{0}^{\prime }a_{1}+a_{0}^{2} & a_{0}^{\prime } & 1%
\end{array}%
\right] \left[
\begin{array}{c}
e_{2} \\
e_{1} \\
e_{0}%
\end{array}%
\right] .  \tag{4.4}  \label{14}
\end{equation}

Hence, Eqs. (\ref{10a})-(\ref{10c}) are equivalently replaced by Eq. (\ref%
{14}). Inserting values of $b_{2}$, $b_{1}$, $b_{0}$ computed in Eqs. (\ref%
{13a})-(\ref{13c}) in Eq. (\ref{12d}), after simplification, we result with%
\begin{equation}
\left( e_{2}+e_{1}+e_{0}-1\right) y(t_{0})=0.  \tag{4.5}  \label{15a}
\end{equation}%
Since, $t_{0}$ is any initial state for non-zero initial conditions $%
y_{A}(t_{0})=y_{B}(t_{0})=y(t_{0})\neq 0,$ ($y_{B}^{\prime }(t_{0})$ may be
zero if $a_{0}=1$ due to Eq. (\ref{12b}), Eq. (\ref{15a}) implies that
\begin{equation}
e_{2}+e_{1}+e_{0}=1.  \tag{4.6}  \label{15b}
\end{equation}%
If commutativity with non-zero initial conditions is to be satisfied. Hence,
Eq. (\ref{12d}) can be relaced by%
\begin{equation}
y_{B}(t_{0})=y_{A}(t_{0})\neq 0  \tag{4.7}  \label{16a}
\end{equation}%
\begin{equation}
y_{B}^{\prime }(t_{0})=\frac{1-a_{0}}{a_{1}}y_{A}(t_{0}),  \tag{4.8}
\label{16b}
\end{equation}%
\begin{equation}
e_{2}+e_{1}+e_{0}=1,  \tag{4.9}  \label{16c}
\end{equation}%
\begin{equation}
y^{\prime \prime }=\frac{(1-a_{0})(1-a_{0}-a_{1}^{\prime })-a_{0}^{\prime
}a_{1}}{a_{1}^{2}}y_{A}(t_{0}).  \tag{4.10}  \label{16d}
\end{equation}

\section{Decomposition Formulas}

We now express the coefficients of the decompositions $A$ and $B$ in terms
of these of the decomposed system $C$. Comparing Eqs. (\ref{1}) and (\ref{5a}%
), equating the coefficients of third derivatives, and using Eq. (\ref{14}),
we have

\begin{equation}
a_{1}b_{2}=c_{3}=a_{1}e_{1}a_{1}^{2}\rightarrow a_{1}=\left( \frac{c_{3}}{%
e_{2}}\right) ^{1/3}.  \tag{5.1}  \label{17a}
\end{equation}

Comparing Eqs. (\ref{1}) and (\ref{5a}), equating the coefficients of second
derivatives, and using Eq. (\ref{14}), we obtain

\begin{equation*}
a_{1}b_{2}^{\prime }+a_{1}b_{1}+a_{0}b_{2}=c_{2}\rightarrow a_{0}=\frac{1}{%
b_{2}}\left( c_{2}-a_{1}b_{2}^{\prime }-a_{1}b_{1}\right)
\end{equation*}

\begin{equation*}
=\frac{1}{e_{2}a_{1}^{2}}\left\{ c_{2}-a_{1}\frac{d}{dt}\left(
e_{2}a_{1}^{2}\right) -a_{1}e_{2}[(a_{1}^{\prime
}+2a_{0})a_{1}+e_{1}a_{1}]\right\}
\end{equation*}

\begin{equation*}
=\frac{1}{e_{2}a_{1}^{2}}\left[ c_{2}-2e_{2}a_{1}^{2}a_{1}^{\prime
}-e_{2}a_{1}^{2}(a_{1}^{\prime }+2a_{0})-a_{1}^{2}\right]
\end{equation*}

\begin{equation*}
=\frac{1}{e_{2}}\left( \frac{c_{2}}{a_{1}^{2}}-2e_{2}a_{1}^{\prime
}-e_{2}a_{1}^{\prime }-2e_{2}a_{0}-e_{1}\right) =\frac{c_{2}}{e_{2}a_{1}^{2}}%
-3a_{1}^{\prime }-2a_{0}-\frac{e_{1}}{e_{2}}
\end{equation*}

\begin{equation*}
3a_{0}=\frac{c_{2}}{e_{2}a_{1}^{2}}-3a_{1}^{\prime }-\frac{e_{1}}{e_{2}}.
\end{equation*}

Dividing by $3$ and using Eq. (\ref{17a}), we proceed as

\begin{equation*}
a_{0}=\frac{c_{2}}{e_{2}a_{1}^{2}}-a_{1}^{\prime }-\frac{e_{1}}{3e_{2}}=%
\frac{(e_{2})^{1/3}c_{2}}{3e_{2}(c_{3})^{2/3}}-\frac{1}{3}\left( \frac{c_{3}%
}{e_{2}}\right) ^{-2/3}\frac{c_{3}^{\prime }}{e_{2}}-\frac{e_{1}}{3e_{2}}
\end{equation*}

\begin{equation}
=\frac{c_{2}}{3e_{2}^{1/3}c_{3}^{2/3}}-\frac{c_{3}^{\prime }}{%
3c_{3}^{2/3}e_{2}^{1/3}}-\frac{e_{1}}{3e_{2}}=\frac{c_{2}-c_{3}^{\prime }}{%
3e_{2}^{1/3}c_{3}^{2/3}}-\frac{e_{1}}{3e_{2}}.  \tag{5.2}  \label{17b}
\end{equation}%
Having computing $a_{1}$ and $a_{0}$ in Eqs. (\ref{17a}) and (\ref{17b}),
inserting those values in Eq. (\ref{14}), we compute $b_{2}$, $b_{1}$, $%
b_{0} $ and the results:

\begin{equation}
b_{2}=3e_{2}^{1/3}c_{3}^{2/3},  \tag{5.3}  \label{18a}
\end{equation}%
\begin{equation}
b_{1}=\frac{1}{3}\left[ \left( \frac{e_{2}}{c_{3}}\right) ^{1/3}\left(
2c_{2}-c_{3}^{\prime }\right) +e_{1}\left( \frac{c_{3}}{e_{2}}\right) ^{1/3}%
\right] ,  \tag{5.4}  \label{18b}
\end{equation}%
\begin{equation*}
b_{0}=\frac{1}{9}\left[ \left( \frac{e_{2}}{c_{3}}\right) ^{1/3}\left(
3c_{2}^{\prime }-3c_{3}^{\prime \prime }\right) +\frac{c_{2}^{2}+\left(
c_{3}^{\prime }\right) ^{2}-4c_{2}c_{3}^{\prime }}{c_{3}}\right]
\end{equation*}%
\begin{equation}
+\frac{1}{9}\left[ \frac{e_{1}(c_{2}-c_{3}^{\prime })}{e_{2}^{1/3}c_{3}^{2/3}%
}-\frac{2e_{1}^{2}}{e_{2}}\right] +e_{0}.  \tag{5.5}  \label{18c}
\end{equation}

Comparing Eq. (\ref{1}) with Eq. (\ref{5a}), two additional equations should
be satisfied for the equivalence of $C$ and $AB$ (or $BA$, since $AB$ is a
commutative pair). These are%
\begin{equation}
c_{1}=a_{1}b_{1}^{\prime }+a_{1}b_{0}+a_{0}b_{1},  \tag{5.6}  \label{19a}
\end{equation}%
\begin{equation}
c_{0}=a_{1}b_{0}^{\prime }+a_{0}b_{0}.  \tag{5.7}  \label{19b}
\end{equation}

Inserting the valves of $a_{1}$, $a_{0}$ in\ Eqs. (\ref{17a}), (\ref{17b})
and $b_{1}$, $b_{0}$ as computed in Eqs. (\ref{18b}), (\ref{18c}) into Eqs. (%
\ref{19a}) and (\ref{19b}) and making a grate deal of computations, we
obtain the additional conditions to be satisfied;
\begin{equation}
c_{1}=\left( c_{2}^{\prime }-\frac{2}{3}c_{3}^{\prime \prime }\right) +\frac{%
1}{c_{3}}\left[ \frac{5}{9}\left( c_{3}^{\prime }\right)
^{2}-c_{2}c_{3}^{\prime }+\frac{c_{2}^{2}}{3}\right] +c_{3}^{1/3}\frac{1}{%
e_{2}^{1/3}}\left( e_{0}-\frac{e_{1}^{2}}{3e_{2}}\right) ,  \tag{5.8}
\label{20a}
\end{equation}%
\begin{equation*}
c_{0}=\frac{1}{3}\left( c_{2}^{\prime \prime }-c_{3}^{\prime \prime \prime
}\right) +\frac{1}{3c_{3}}(c_{2}-2c_{3}^{\prime })(c_{2}^{\prime
}-c_{3}^{\prime \prime })
\end{equation*}%
\begin{equation*}
+\frac{1}{27c_{3}^{2}}\left[ 15\left( c_{3}^{\prime }\right)
^{2}-8c_{3}^{\prime }c_{2}-6c_{3}^{\prime \prime }c_{2}+c_{2}^{2}\right]
(c_{2}-c_{2}^{\prime })
\end{equation*}%
\begin{equation}
+\frac{c_{2}-c_{3}^{\prime }}{3c_{3}^{2/3}b_{2}e_{2}^{1/3}}\left( e_{0}-%
\frac{e_{1}^{2}}{3e_{2}}\right) +\frac{e_{1}}{3}\left( \frac{2e_{1}^{2}}{9}-%
\frac{e_{0}}{e_{2}}\right) .  \tag{5.9}  \label{20b}
\end{equation}

In the light of the result obtained so far, we now express the main theorem
about the decomposition of a third-order linear time-varying system into its
commutative first and second-order linear time-varying components.

\begin{theorem}
The necessary and sufficient conditions that a third-order linear
time-varying system described by Eq. (\ref{1}) into its cascade connected
linear time-varying commutative pairs of first-order and second-order are
that

\begin{description}
\item[i)] The coefficient $c_{1}$ and $c_{0}$ be expressible in terms of $%
c_{3}$ and $c_{2}$ through formulas Eqs. (\ref{20a}) and (\ref{20b}) where $%
e_{2}$, $e_{1}$, $e_{0}$ are some constants.

\item[ii)] If the condition $y(t_{0})$ of $C$ is different from zero,
additional necessary and sufficient condition are expressed as%
\begin{equation}
e_{2}+e_{1}+e_{0}=1,  \tag{5.10}  \label{21a}
\end{equation}%
\begin{equation}
y^{\prime }(t_{0})=\left[ \left( \frac{e_{2}}{c_{3}}\right) ^{1/3}\left( 1+%
\frac{e_{1}}{3e_{2}}\right) -\frac{c_{2}-c_{3}^{\prime }}{3c_{3}}\right]
y(t_{0})\text{ at }t=t_{0},  \tag{5.11}  \label{21b}
\end{equation}%
\begin{equation*}
y^{\prime }(t_{0})=\left[ \left( \frac{e_{2}}{c_{3}}\right) ^{1/3}\left( 1+%
\frac{e_{1}}{3e_{2}}\right) -\frac{c_{2}-c_{3}^{\prime }}{3c_{3}}\right]
^{2}y_{A}(t_{0})
\end{equation*}%
\begin{equation}
+\frac{d}{dt}\left[ \left( \frac{e_{2}}{c_{3}}\right) ^{1/3}\left( 1+\frac{%
e_{1}}{3e_{2}}\right) -\frac{c_{2}-c_{3}^{\prime }}{3c_{3}}\right]
y_{A}(t_{0}).  \tag{5.12}  \label{21c}
\end{equation}
\end{description}
\end{theorem}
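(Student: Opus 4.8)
The plan is to assemble the theorem from the computations already carried out in Sections III--V: almost no new idea is required, and the argument splits cleanly into a necessity part and a sufficiency part.

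\textbf{Necessity.} Suppose $C$ is realised as the commutative cascade of a first-order block $A$ and a second-order block $B$, so that as input--output operators $C=AB=BA$. By the derivations of Section II, $AB$ is governed by Eq.~(\ref{5a}) and $BA$ by Eq.~(\ref{8a}); equality of these two third-order operators forces the coefficient identities (\ref{10a})--(\ref{10c}). Solving that (triangular, first-order linear) system for $b_2,b_1,b_0$ as in Section IV shows that the given $b_i$ must lie in the three-parameter family (\ref{14}), the parameters $e_2,e_1,e_0$ being the integration constants, with $e_2\neq0$ forced by $b_2\neq0$. Next I would impose $C=AB$: matching the coefficient of $y'''$ in Eqs.~(\ref{1}) and (\ref{5a}) gives Eq.~(\ref{17a}) for $a_1$; matching the coefficient of $y''$ gives Eq.~(\ref{17b}) for $a_0$; substituting these into (\ref{14}) produces $b_2,b_1,b_0$ in the forms (\ref{18a})--(\ref{18c}). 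The remaining two coefficient equations (those of $y'$ and $y$) are Eqs.~(\ref{19a})--(\ref{19b}), and upon inserting the explicit $a_1,a_0,b_1,b_0$ these become precisely (\ref{20a})--(\ref{20b}); this is item (i). For item (ii), the only content beyond coincidence of the two operators is coincidence of the induced initial conditions of $AB$ and $BA$; Section III reduced this to the relations (\ref{12a})--(\ref{12c}) together with the consistency requirement (\ref{12d}). Inserting the explicit $b_i$ into (\ref{12d}) collapses it (as in Section IV) to $(e_2+e_1+e_0-1)\,y(t_0)=0$, so $y(t_0)\neq0$ forces (\ref{21a}); and substituting $a_0,a_1$ from (\ref{17b})--(\ref{17a}) into (\ref{12b}) and (\ref{12c}) turns them into (\ref{21b}) and (\ref{21c}).

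\textbf{Sufficiency.} For the converse I would assume (i), then pick a real, sufficiently smooth branch of the cube root $a_1=(c_3/e_2)^{1/3}$ (legitimate since $c_3$ is nowhere zero and $e_2\neq0$, whence $a_1\neq0$), and define $a_0$ by (\ref{17b}) and $b_2,b_1,b_0$ by (\ref{18a})--(\ref{18c}), equivalently by (\ref{14}). By construction the triple $(b_2,b_1,b_0)$ sits in the family (\ref{14}) and hence satisfies (\ref{10a})--(\ref{10c}), so the operators $AB$ and $BA$ built from these blocks have one and the same differential equation. Again by construction the $y'''$- and $y''$-coefficients of $AB$ equal those of $C$, while the standing hypothesis (\ref{20a})--(\ref{20b}) is exactly (\ref{19a})--(\ref{19b}), so the $y'$- and $y$-coefficients of $AB$ also equal those of $C$; hence $AB$ (and $BA$) coincides with $C$ as a differential operator. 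If the initial conditions of $C$ vanish this already displays $C$ as the desired commutative cascade (the matched internal states of both orderings are then zero, and (\ref{11a})--(\ref{11c}), (\ref{12d}) hold trivially). If $y(t_0)\neq0$, I would impose in addition (\ref{21a}); then (\ref{12d}) holds, and choosing $y_A(t_0)=y_B(t_0)=y(t_0)$ with $y_B'(t_0)$ given by (\ref{16b}) makes (\ref{12a})--(\ref{12c}) valid, which under the above substitutions are exactly (\ref{21b})--(\ref{21c}). Thus $AB$ and $BA$ share the same induced initial conditions and the cascade is commutative in the full sense, completing the equivalence.

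\textbf{Main obstacle.} The genuinely laborious step --- turning Eqs.~(\ref{19a})--(\ref{19b}) into the closed forms (\ref{20a})--(\ref{20b}) after the substitutions --- has already been absorbed into Section V, so inside the proof proper the delicate points are only: (a) verifying that the general solution of the linear system (\ref{10a})--(\ref{10c}) is exactly the three-parameter family (\ref{14}), with $e_2\neq0$ forced by $b_2\neq0$; (b) the implicit regularity bookkeeping --- $c_3$ and $c_2$ must be differentiable enough that the reconstructed $a_1,a_0$ and then $b_2,b_1,b_0$ meet the standing smoothness hypotheses, and a real differentiable cube root must be selectable; and (c) carefully translating the initial-condition commutativity constraints (\ref{11a})--(\ref{11c}) / (\ref{12a})--(\ref{12d}) into the explicit statements (\ref{21a})--(\ref{21c}), treating the case $y(t_0)\neq0$ separately from the zero-initial-condition case.
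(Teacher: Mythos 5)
Your proposal is correct and follows essentially the same route as the paper: both arguments assemble the theorem from the machinery of Sections II--V, namely the commutativity identities (\ref{10a})--(\ref{10c}) and their solution (\ref{14}), the coefficient matching giving (\ref{17a})--(\ref{18c}) and hence (\ref{20a})--(\ref{20b}), and the substitution of $a_1,a_0$ into (\ref{16b})--(\ref{16d}) to obtain (\ref{21a})--(\ref{21c}). Your write-up is merely more explicit than the paper's (which states the substitutions and omits a separately worded sufficiency direction), but no new idea or different decomposition is involved.
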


\begin{proof}
Part i) is simply re-expressing of Eqs. (\ref{20a}) and (\ref{20b}). Eq. (%
\ref{21a}) is the repetition of Eq. (\ref{16c}). Eq. (\ref{21b}) is obtained
from Eq. (\ref{16b}) by inserting in the valves of $a_{1}$ and $a_{0}$ in
Eqs. (\ref{17a}) and (\ref{17b}), respectively. So,
\end{proof}

\begin{equation*}
y^{\prime }(t_{0})=\frac{1-a_{0}}{a_{1}}y(t_{0})=\left( \frac{e_{2}}{c_{3}}%
\right) ^{1/3}\left( 1+\frac{e_{1}}{3e_{2}}-\frac{c_{2}-c{^{\prime }}_{3}}{%
3e_{2}^{1/3}c_{3}^{2/3}}\right) y(t_{0})
\end{equation*}

\begin{equation*}
=\left[ \left( \frac{e_{2}}{c_{3}}\right) ^{1/3}\left( 1+\frac{e_{1}}{3e_{2}}%
\right) -\frac{c_{2}-c_{3}^{\prime }}{3c_{3}}\right] y(t_{0}).
\end{equation*}

Finally, Eq. (\ref{21c}) is obtained from From Eq. (\ref{16c}) by inserting
in valves of $a_{1}$ and $a_{0}$ in Eqs. (\ref{17a}) and (\ref{17b}),
respectively as follows:%
\begin{equation*}
y^{\prime \prime }(t_{0})=\frac{\left( 1-a_{0}\right) \left(
1-a_{0}-a_{1}^{\prime }\right) -a_{0}^{\prime }a_{1}}{a_{1}^{2}}y_{A}(t_{0})
\end{equation*}%
\begin{equation*}
=\frac{(1-a_{0})^{2}-a_{1}^{\prime }+a_{0}a_{1}^{\prime }-a_{0}^{\prime
}a_{1}}{a_{1}^{2}}y_{A}(t_{0})
\end{equation*}

\begin{equation*}
=\left[ \frac{(1-a_{0})^{2}}{a_{1}^{2}}+\frac{d}{dt}\frac{1}{a_{1}}-\frac{d}{%
dt}\frac{a_{0}}{a_{1}}\right] y_{A}(t_{0})=\left[ \frac{(1-a_{0})^{2}}{%
a_{1}^{2}}+\frac{d}{dt}\frac{1-a_{0}}{a_{1}}\right] y_{A}(t_{0})
\end{equation*}%
\begin{equation*}
=\left[ \left( \frac{e_{2}}{c_{3}}\right) ^{1/3}\left( 1+\frac{e_{1}}{3e_{2}}%
\right) -\frac{c_{2}-c_{3}^{\prime }}{3c_{3}}\right] ^{2}y_{A}(t_{0})
\end{equation*}%
\begin{equation*}
+\frac{d}{dt}\left[ \left( \frac{e_{2}}{c_{3}}\right) ^{1/3}\left( 1+\frac{%
e_{1}}{3e_{2}}\right) -\frac{c_{2}-c_{3}^{\prime }}{3c_{3}}\right]
y_{A}(t_{0}).
\end{equation*}

The second theorem expresses how to obtain the commutative pairs of
decomposition $A$ and $B$.

\begin{theorem}
For a third-order linear time-varying system $C$ described by Eq.(\ref{1})
with the conditions of Theorem I satisfied, the decomposed commutative pairs
$A$ and $B$ are found by Eqs. (\ref{17a}) and (\ref{17b}) for the
coefficients $a_{1}$ and $a_{0}$ of $A$, and by Eqs. (\ref{18a}), (\ref{18b}%
), (\ref{18c}) for the coefficients of $b_{2}$ , $b_{1}$, $b_{0}$ of $B$,
all respectively. Further, for the commutative decompositions with non-zero
initial condition $y(t_{0})\neq 0$, Eq. (\ref{21a}) relating the constants $%
e_{2}$, $e_{1}$, $e_{0}$ must be satisfied; and $y_{B}^{\prime
}(t_{0})=y^{\prime }(t_{0})$ and $y^{\prime \prime }(t_{0})$ must be
expressible in terms of $y(t_{0})=y_{B}(t_{0})=y_{A}(t_{0})$ as in Eqs. (\ref%
{21b}) and (\ref{21c}), respectively.
\end{theorem}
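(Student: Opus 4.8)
My plan is to assemble the computations of Sections~II--V into a short existence-and-uniqueness argument, since the statement is really a catalogue of formulas already derived and the proof amounts to checking that they fit together. I would begin by assuming the conditions of Theorem~1, so that constants $e_2\neq 0$, $e_1$, $e_0$ exist with $c_1$, $c_0$ given by Eqs.~(\ref{20a})--(\ref{20b}) in terms of $c_3$, $c_2$. I would then \emph{define} $a_1$ and $a_0$ by Eqs.~(\ref{17a})--(\ref{17b}) — noting that this presupposes $c_3/e_2>0$ so that the real cube root is legitimate, a point worth stating explicitly — and \emph{define} $b_2$, $b_1$, $b_0$ by feeding these $a_1$, $a_0$ into the matrix relation Eq.~(\ref{14}), which reproduces Eqs.~(\ref{18a})--(\ref{18c}). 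The differentiability required of $c_3$ and $c_2$ then propagates through these formulas so that the $a_i$ and $b_i$ meet the standing differentiability assumptions imposed on $A$ and $B$.

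Next I would verify that $A$ and $B$ so constructed are a commutative pair whose cascade is $C$. For commutativity: Eq.~(\ref{14}) holds by construction, and Section~IV showed Eq.~(\ref{14}) to be equivalent to the commutativity constraints Eqs.~(\ref{10a})--(\ref{10c}), so $AB$ and $BA$ satisfy the same describing equation. For equivalence with $C$: the $y'''$ and $y''$ coefficients of Eq.~(\ref{5a}) match $c_3$ and $c_2$ by the very definition of $a_1$, $a_0$, while the $y'$ and $y$ coefficients reduce, after substitution, to exactly Eqs.~(\ref{20a})--(\ref{20b}), which hold by Theorem~1(i); hence Eq.~(\ref{5a}) coincides with Eq.~(\ref{1}). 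Invoking uniqueness of solutions of the resulting third-order linear system then gives that the input--output maps of $AB$, $BA$ and $C$ all agree.

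For the initial data, when $y(t_0)\neq 0$ the scalar identity Eq.~(\ref{15a}) — obtained by inserting Eqs.~(\ref{18a})--(\ref{18c}) into Eq.~(\ref{12d}) — forces $e_2+e_1+e_0=1$, which is Eq.~(\ref{21a}). Then Eqs.~(\ref{21b})--(\ref{21c}) follow by substituting $a_1$, $a_0$ from Eqs.~(\ref{17a})--(\ref{17b}) into Eqs.~(\ref{16b}) and~(\ref{16d}): for Eq.~(\ref{21b}) one computes $(1-a_0)/a_1$ directly; for Eq.~(\ref{21c}) one rewrites the numerator $(1-a_0)(1-a_0-a_1')-a_0'a_1$ as $(1-a_0)^2+a_1^2\frac{d}{dt}\big((1-a_0)/a_1\big)$ — the identity already used in the proof of Theorem~1 — and identifies $(1-a_0)/a_1$ with the bracketed coefficient in Eq.~(\ref{21b}). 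For the converse, any commutative decomposition of $C$ must obey Eqs.~(\ref{10a})--(\ref{11c}), hence Eq.~(\ref{14}), hence the coefficient-matching that produces Eqs.~(\ref{17a})--(\ref{18c}) and the constraints Eqs.~(\ref{20a})--(\ref{20b}); this supplies the ``only if'' direction and shows the decomposition is unique once $e_2$, $e_1$, $e_0$ are fixed.

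The step I expect to require the most care is not any single computation but the logical closure of the chain $C\equiv AB\equiv BA$: one must confirm that matching all four surviving coefficients \emph{together with} the transformed initial conditions Eqs.~(\ref{21b})--(\ref{21c}) is precisely what is needed to conclude equality of the solution operators, and that the apparently distinct initial-condition formulas arising from the $AB$ ordering (Eqs.~(\ref{5b})--(\ref{5d})) and the $BA$ ordering (Eqs.~(\ref{8b})--(\ref{8d})) are mutually consistent under the commutativity relations — this consistency is what collapses them to the single pair Eqs.~(\ref{21b})--(\ref{21c}). The algebraic reductions leading to Eqs.~(\ref{20a})--(\ref{20b}) are lengthy but mechanical, so I would perform them by direct substitution rather than reproduce them in full.
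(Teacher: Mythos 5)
Your proposal is correct and follows essentially the same route as the paper: the paper's own proof of this theorem is simply the remark that it ``follows from the development of the mentioned equations,'' i.e.\ the assembly of Eqs.~(\ref{10a})--(\ref{16d}) and the Section~V coefficient matching that you carry out explicitly. Your write-up merely makes that assembly (including the substitution yielding Eqs.~(\ref{21b})--(\ref{21c}) via the identity $(1-a_0)(1-a_0-a_1')-a_0'a_1=(1-a_0)^2+a_1^2\frac{d}{dt}\bigl((1-a_0)/a_1\bigr)$) more explicit than the paper does, which is fine.
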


\begin{proof}
The proof follows from the development of the mentioned equation. Equality
of $y(t_{0})=y_{B}(t_{0})=y_{A}(t_{0})$ is a result of Eqs. (\ref{5b}) and (%
\ref{6a}); equality of $y^{\prime }(t_{0})=y_{A}^{\prime }(t_{0})$ is
already expressed in Eq. (\ref{5b}).
\end{proof}

\section{\protect\bigskip Examples}

In this section, four examples are considered to illustrate the results of
the paper. The simulations are conducted by MATLAB R2012a and obtained by a
PC Intel\registered\ Core\texttrademark\ i3 CPV, 2.13 GHz, 3.86 GB of RAM
well verify the results.

\subsection{\protect\bigskip Example 1}

Let $C$ be the third-order linear time-varying system defined by%
\begin{equation}
y^{\prime \prime \prime }(t)+(t+1)y^{\prime \prime }(t)+\frac{1}{3}%
(t^{2}+2t)y^{\prime }(t)+\frac{1}{27}\left( t^{3}+3t^{2}+9\right) y(t)=x(t),
\tag{6.1}  \label{22a}
\end{equation}%
which the coefficients are%
\begin{equation}
c_{3}=1,\text{ }c_{2}=(t+1),\text{ }c_{1}=\frac{1}{3}(t^{2}+2t),\text{ }%
c_{0}=\frac{1}{27}(t^{3}+3t^{2}+9),  \tag{6.2}  \label{22b}
\end{equation}%
with the constants%
\begin{equation}
e_{2}=e_{1}=1,\text{ }e_{0}=-1,  \tag{6.3}  \label{22c}
\end{equation}%
which satisfies Eq. (\ref{21a}), it is true the conditions $i)$ of Theorem I
are satisfied; that is $c_{1}$ and $c_{0}$ satisfy Eqs. (\ref{20a}) and (\ref%
{20b}), respectively. For the validity of the decomposition with non-zero
initial condition $y(t_{0})\neq 0,$ condition (\ref{21a}) of $ii)$ is
satisfied by the constant chosen in Eq. (\ref{22c}). Further, Eqs. (\ref{21b}%
) and (\ref{21c}) of $ii)$ together with Theorem 2 yield

\begin{equation}
y_{A}(t_{0})=y_{B}(t_{0})=y(t_{0}),  \tag{6.4}  \label{23a}
\end{equation}%
\begin{equation*}
y_{B}^{\prime }(t_{0})=y^{\prime }(t_{0})=\left[ \left( \frac{1}{1}\right)
^{1/3}\left( 1+\frac{1}{3}\right) -\frac{t_{0}+1}{3}\right] y(t_{0})
\end{equation*}%
\begin{equation}
=\left( 1-\frac{t_{0}}{3}\right) y(t_{0})=y(t_{0})\text{ for }t_{0}=0,
\tag{6.5}  \label{23b}
\end{equation}%
\begin{equation*}
y^{\prime \prime }(t_{0})=\left[ \left( 1+\frac{1}{3}-\frac{t_{0}+1}{3}%
\right) ^{2}+\frac{d}{dt}\left. \left( 1+\frac{1}{3}-\frac{t+1}{3}\right)
\right\vert _{t=t_{0}}\right] y(t_{0})
\end{equation*}%
\begin{equation*}
=\left[ \left( 1-\frac{t_{0}}{3}\right) ^{2}+\frac{d}{dt}\left. \left( 1-%
\frac{t}{3}\right) \right\vert _{t=t_{0}}\right] y(t_{0})
\end{equation*}%
\begin{equation}
=\left[ \left( 1-\frac{t_{0}}{3}\right) ^{2}-\frac{1}{3}\right] y(t_{0})=%
\frac{2}{3}y(t_{0})\text{ for }t_{0}=0.  \tag{6.6}  \label{23c}
\end{equation}

Theorem 2 yields the following coefficients for decomposed subsystem $A$ and
$B:$

\begin{equation}
A:y_{A}^{\prime }(t)+\frac{t}{3}y_{A}(t)=x_{A}(t),  \tag{6.7}  \label{24a}
\end{equation}%
\begin{equation}
B:y_{B}^{\prime \prime }(t)+\frac{2t+3}{3}y_{B}^{\prime }(t)+\frac{t^{2}+3t-6%
}{9}y_{B}(t)=x_{B}(t).  \tag{6.8}  \label{24b}
\end{equation}%
Simulations are carried out with a sinusoidal input of amplitude $10$, bias $%
-5$ and frequency $3$. Fixed step length of $0.01$ is used by
ode(Bogacki-Shampine). Simulink results of MATLAB R2012 are shown in Fig. 2.
The initial time $t_{0}$ is assumed $1$ and the initial states are taken as $%
y(1)=y_{A}(1)=y_{B}(1)=1$. When $y_{B}^{\prime }(1)=y^{\prime }(1)=1$ and $%
y^{\prime \prime }(1)=2/3$ as implied by (\ref{23a}), (\ref{23b}), (\ref{23c}%
) all the decomposition conditions are satisfied and $AB,$ $BA,$ and $C$
give the same responses as indicated by the figure legend. But when $%
y^{\prime \prime }(1)$ is changed to $2$ which does not satisfy (\ref{23c}),
the response $C1$ becomes different from those of $AB$ and $BC$, that is the
decomposition get spoiled; although $A$ and $B$ are commutative, they are
not the correct decomposition of $C$. On the other hand, when $y_{B}^{\prime
}(1)$ is made $-1$, that is (\ref{23b}) is not satisfied, the response of $%
AB $ (indicated by $AB3$) gets different from those of $BA$ and $C$, so
commutative decomposition of $C$ into $A$ and $B$ is not valid again.

\begin{figure}[tbp]
\includegraphics[width=12cm, height=12cm]{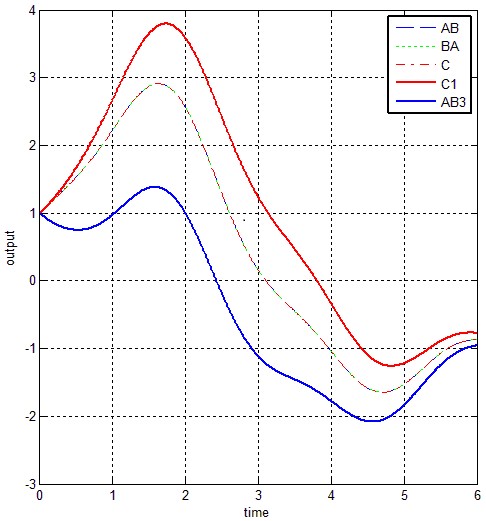}
\caption{Decomposition of $C$ into its commutative pairs $A$ and $B $ ($%
AB,BA,C$); some of the conditions of decomposition are not satisfied ($C1,$ $%
AB3$)}
\label{.}
\end{figure}

\subsection{Example 2}

Consider $C$ defined by%
\begin{equation}
t^{3}y^{\prime \prime \prime }(t)+7t^{2}y^{\prime \prime }(t)+9ty^{\prime
}(t)+y(t)=x(t),  \tag{6.9}  \label{25.1}
\end{equation}%
which satisfies the condition of Theorem I with $e_{2}=e_{1}=1,$ $e_{0}=-1.$
Hence, with $c_{3}=t^{3},$ $c_{2}=7t^{2},$ the initial conditions should
satisfy Eqs. (\ref{21b}) and (\ref{21c}):%
\begin{equation}
y^{\prime }(t_{0})=\left[ \frac{1}{t_{0}}\left( 1+\frac{1}{3}\right) -\frac{%
7t_{0}^{2}-3t_{0}^{2}}{3t_{0}^{3}}\right] y(t_{0})=0,  \tag{6.10}
\label{25.2}
\end{equation}%
\begin{equation*}
y^{\prime \prime }(t_{0})=\left\{ \left[ \frac{1}{t_{0}}\left( 1+\frac{1}{3}%
\right) -\frac{7t_{0}^{2}-3t_{0}^{2}}{3t_{0}^{3}}\right] ^{2}\right.
\end{equation*}%
\begin{equation}
\left. +\frac{d}{dt}\left. \left[ \frac{1}{t}\left( 1+\frac{1}{3}\right) -%
\frac{7t^{2}-3t^{2}}{3t^{3}}\right] \right\vert _{t=t_{0}}\right\}
y(t_{0})=0.  \tag{6.11}  \label{25.3}
\end{equation}

The decompositions $A$ and $B$ are found by using Eqs. (\ref{17a}), (\ref%
{17b}) and (\ref{18a})-(\ref{18c}) as%
\begin{equation}
A:ty_{A}^{\prime }(t)+y_{A}(t)=x_{A}(t),\text{ }y_{A}(t_{0})=y\left(
t_{0}\right) ,  \tag{6.12}  \label{25.4}
\end{equation}%
\begin{equation}
B:t^{2}y_{B}^{\prime \prime }(t)+4ty_{B}^{\prime
}(t)+y_{B}(t)=x_{B}(t),y_{B}(t_{0})=y\left( t_{0}\right) ,\text{ }%
y_{B}^{\prime }(t_{0})=y^{\prime }\left( t_{0}\right) =0.  \tag{6.13}
\label{25.5}
\end{equation}

Note that $y^{\prime }(t_{0})=y_{B}^{\prime }\left( t_{0}\right) ,$ $%
y^{\prime \prime }(t_{0})$ are zero for all initial times $t_{0}.$ The
simulations are carried out with a sinusoidal input of amplitude $100$,
frequency $100$ Hz and phase $\pi /3$ rad; the initial time $t_{0}$ is taken
as $0.01$ and stop time is $0.15$; ode(Bogacki-Shampine) solver is used with
step-length of $0.001.$ The initial values are assumed as $%
y_{A}(t_{0})=y_{B}(t_{0})=y(t_{0})=-4$. As it is seen in Fig. 3, $C$ and its
commutative decompositions $AB$ and $BA$ yield the same responses (see $%
AB=BA=C$). When the decomposition requirement on initial condition get
spoiled, that is $y^{\prime }(0.01)=y_{B}^{\prime }(0.01)\neq 0$ and taken
as $-100$, the decomposition is not valid at all as seen from plots $AB1,$ $%
BA1,$ $C1$ in the figure. It is important to note that the cascade
connection $BA$ is least affected from this change. Hence, it is preferable
decomposition or synthesis of $C$ when compared with $AB$ as far as
sensitivity to initial conditions is concerned.

\begin{figure}[tbp]
\includegraphics[width=12cm, height=12cm]{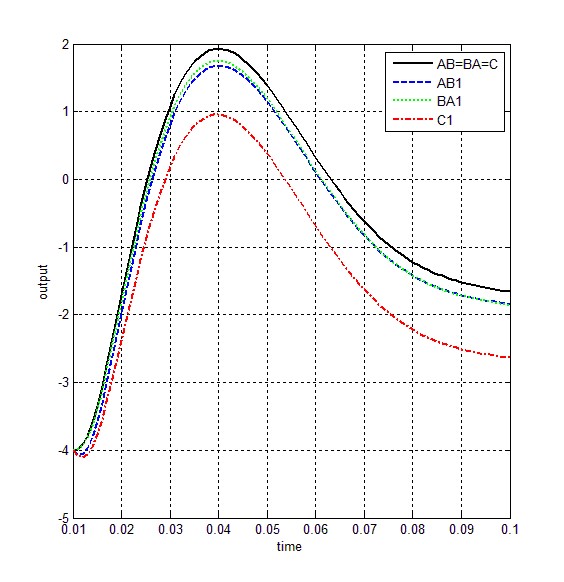}
\caption{Simulation result for Example 2 when decomposition requirements are
satisfied ($AB=BA=C$) and they are not satisfied ($AB1,BA1,C1$)}
\label{.}
\end{figure}

\subsection{Example 3}

Let $C$ be the third-order Euler system defined by%
\begin{equation}
t^{3}y^{\prime \prime \prime }+9t^{2}y^{\prime \prime }+\frac{53}{3}%
ty^{\prime }+\frac{155}{27}y=x.  \tag{6.14}  \label{25a}
\end{equation}%
Comparing it with (\ref{1}), its coefficients are%
\begin{equation}
c_{3}=t^{3},\text{ }c_{2}=9t^{2},\text{ }c_{1}=\frac{53}{3}t,\text{ }c_{0}=%
\frac{155}{27}.  \tag{6.15}  \label{25b}
\end{equation}

It is true that the choice $e_{2}=e_{1}=1,$ $e_{0}=-1$ satisfy the
conditions of Theorem I; that is Eqs. (\ref{20a}), (\ref{20b}) and (\ref{21a}%
) are satisfied. Hence, the decomposition into first and second-order
commutative pairs with non-zero initial condition $y(t_{0})\neq 0$ is
possible. The initial condition of $A$ and $B$ as well as those of $C$ are
found by using Theorem I and II. In fact,%
\begin{equation}
y_{A}(t_{0})=y_{B}(t_{0})=y(t_{0})\neq 0,  \tag{6.16}  \label{26a}
\end{equation}%
\begin{equation}
y_{B}^{\prime }(t_{0})=y^{\prime }(t_{0})=\left[ \left( \frac{1}{t_{0}^{3}}%
\right) ^{1/3}\left( 1+\frac{1}{3}\right) -\frac{9t_{0}^{2}-3t_{0}^{2}}{%
3t_{0}^{2}}\right] y(t_{0})=-\frac{2}{3t_{0}}y(t_{0}),  \tag{6.17}
\label{26b}
\end{equation}%
\begin{equation*}
y^{\prime \prime }(t_{0})=\left\{ \left[ \frac{1}{t_{0}}\left( 1+\frac{1}{3}%
\right) -\frac{3t_{0}^{2}-3t_{0}^{2}-1}{3t_{0}^{2}}\right] ^{2}\right.
\end{equation*}%
\begin{equation*}
\left. +\frac{d}{dt}\left. \left( \frac{1}{t}\frac{4}{3}-\frac{3t^{2}-3t^{2}%
}{3t^{2}}\right) \right\vert _{t=t_{0}}\right\} y(t_{0})
\end{equation*}%
\begin{equation}
=\left( \frac{4}{9t_{0}^{2}}+\frac{2}{3t_{0}^{2}}\right) y(t_{0})=\frac{10}{%
9t_{0}^{2}}y(t_{0}).  \tag{6.18}  \label{26c}
\end{equation}

The decompositions $A$ and $B$ are found by using the coefficients given in
Eqs. (\ref{17a}), (\ref{17b}) and (\ref{18a})-(\ref{18c}). With the above
initial conditions $A$ and $B$ are defined by%
\begin{equation}
A:ty_{A}^{\prime }(t)+\frac{5}{3}y_{A}(t)=x_{A}(t);\text{ }%
y_{A}(t_{0})=y(t_{0}),  \tag{6.19}  \label{27a}
\end{equation}%
\begin{equation*}
B:t^{2}y_{B}^{\prime \prime }(t)+\frac{16}{3}ty_{B}^{\prime }(t)+\frac{31}{9}%
y_{B}(t)=x_{B}(t);
\end{equation*}%
\begin{equation}
\text{ }y_{B}(t_{0})=y(t_{0});\text{ }y_{B}^{\prime }(t_{0})=y^{\prime
}(t_{0})=-\frac{2}{3t_{0}^{2}}y(t_{0}).  \tag{6.20}  \label{27b}
\end{equation}%
\qquad

The simulations are done for sinusoidal input of amplitude $10$ and
frequency $1$. The initial time $t_{0}=1;$ ode3(Bogacki-Shampine) solver is
used with a fixed step-length of $0.01$; simulations are stopped at $t=10$.
When the initial conditions $y^{\prime \prime }(1)=10/9,$ $y^{\prime
}(1)=y_{B}^{\prime }(1)=-2/3$ are chosen in accordance with $%
y_{A}(1)=y_{B}(1)=y(1)=1$ as to satisfy the decomposition above mentioned
conditions, $AB,$ $BA,$ $C$ give the same response as shown in Fig. 4 (see $%
AB=BA=C$). In the same figure, zero input responses ($AB1=BA1=C1$) and zero
state responses ($AB2=BA2=C2$) are also potted. Obviously, decomposition is
valid for unexited-unrelaxed and exited-relaxed cases as well.

\begin{figure}[tbp]
\includegraphics[width=12cm, height=12cm]{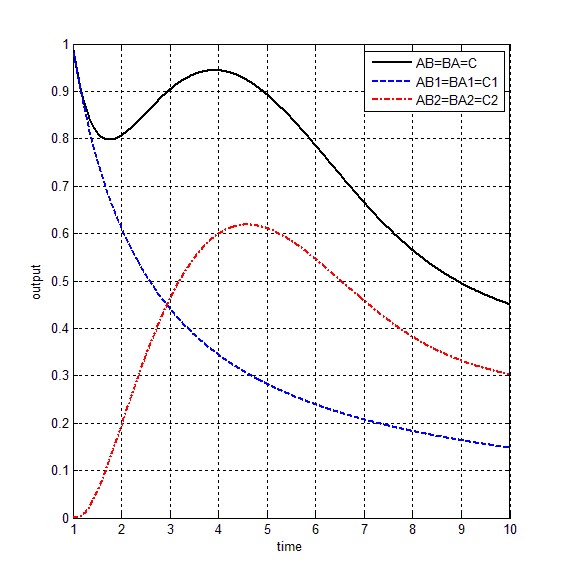}
\caption{Complete zero-input ($C1$), and zero-state ($C2$) responses of
Example 3}
\label{.}
\end{figure}

\subsection{Example 4}

This example is the same as the first one except all the initial conditions
are taken as zero and a noise signal is added between the junction of
subsystems $A$ and $B$. The noise is a pulse sequence with amplitude $4$, $%
\% $ $50$ pulse with, and a bias of $-2.3$. The simulation results are shown
in Fig. 5. Obviously, the interconnection $AB$ is less effected by this
noise than $BA$ connection when compared with the output of the original
system $C$. Hence, the cascade synthesis $AB$ should be preferred rather
than $BA$.

\begin{figure}[tbp]
\includegraphics[width=12cm, height=12cm]{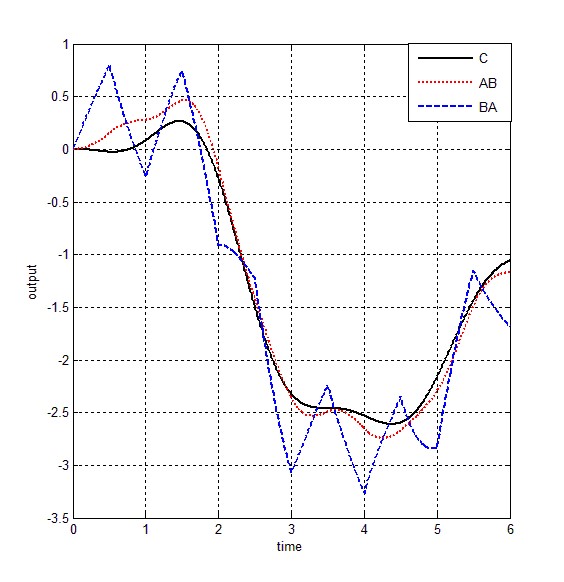}
\caption{Outputs of the original system $C$ and its cascade decompositions $%
AB$ and $BA$ when disturbance exists at the interconnection}
\label{.}
\end{figure}

\section{\protect\bigskip Conclusions}

In this paper, the decomposition of any third-order linear time-varying
system into its first and second-order commutative pairs is investigated.
Explicit decomposition formulas are derived for the case of zero and
non-zero initial conditions. The results are validated by computer
simulations. The work is original and appears for the first time in the
literature. It is important from the synthesis and/or design point of views
of engineering systems. Many design methods are based on tearing and
reconstruction, which is combining simple components to obtain an assembly.
Further, it is shown that some combinations may be better than the others
when sensitivity to initial conditions and noise disturbance at the
interconnection is taken into account. On the other hand, commutativity of
cascade connected systems have gained a grade deal of interest and its
possible benefits have been pointed out on the literature. Hence, the
results of this paper can be used readily for beneficial synthesis of
third-order linear time-varying systems.

\end{document}